\newlength{\dhatheight}
\newtheorem{definition}{Definition}
\newtheorem{remark}{Remark}
\newtheorem{theorem}{Theorem}
\newtheorem{lemma}{Lemma}
\newenvironment{proof}{
\noindent
    {\bf Proof:}
  }
  {
\vspace*{0pt}\hfill$\blacksquare$
  }
\newcommand\blfootnote[1]{%
  \begingroup
  \renewcommand\thefootnote{}\footnote{#1}%
  \addtocounter{footnote}{-1}%
  \endgroup
}
\begin{document}
\begin{frontmatter}

\title{Improving Linear State-Space Models with Additional Iterations}


\author[Second+1]{Suat Gumussoy}
\author[Second]{Ahmet Arda Ozdemir}
\author[First+1]{Tomas McKelvey}
\author[First]{Lennart Ljung} 
\author[First+2]{Mladen Gibanica}
\author[Third]{Rajiv Singh} 


\address[First]{Div. of Automatic Control, Link\"oping University,
  Sweden  (lennart.ljung@liu.se)}
\address[First+1]{Chalmers University of Technology (tomas.mckelvey@chalmers.se)}
\address[First+2]{Chalmers University of Technology \& Volvo Car
  Corporation, G\"oteborg, Sweden (mladen.gibanica@chalmers.se)}
 \address[Second+1]{MathWorks, Natick, MA, USA (suat.gumussoy@mathworks.com)}
\address[Second]{MathWorks, Natick, MA, USA (arda.ozdemir@mathworks.com)}
\address[Third]{MathWorks, Natick, MA, USA (rajiv.singh@mathworks.com)}

\begin{abstract}                
  An estimated state-space model can possibly be improved by further iterations
  with estimation data. This contribution specifically studies if models
  obtained by subspace estimation can be improved by subsequent re-estimation of
  the $B$, $C$, and $D$ matrices (which involves linear estimation problems).
  Several tests are performed, which shows that it is generally advisable to do
  such further re-estimation steps using the maximum likelihood
  criterion. Stated more succinctly in terms of
  \textsc{MATLAB}$^\text{\textregistered}$ functions, \texttt{ssest} generally
  outperforms \texttt{n4sid}.
\end{abstract}

\begin{keyword}
Parameter estimation, State-space models, Subspace identification,
Maximum Likelihood
\end{keyword}

\end{frontmatter}

\blfootnote{ MATLAB$^{\text{\textregistered}}$ and
  Simulink$^{\text{\textregistered}}$ are registered trademarks of The
  MathWorks, Inc. See mathworks.com/trademarks for a list of additional
  trademarks. Other product or brand names may be trademarks or registered
  trademarks of their respective holders.

  \hspace{.08in} Data for Section~\ref{sec:xc90}, Fig.~\ref{fig:xc90}, and
  Fig.~\ref{fig:xc90_2} presented in this paper is provided courtesy of Volvo
  Car Corporation. Presentation of the plot and the study results do not grant
  authorization to extracting and reusing the data for any purpose. Extraction
  and reuse of the data require authorization from Volvo Car Corporation.}

\section{Introduction}
Linear state-space models are perhaps the most common model structure used in
system identification. They can be estimated in several different ways. Among
the most common techniques are so called subspace methods, such as MOESP,
\cite{VerhaegenD:92}, and N4SID, \cite{OverscheeDM:94}.  These were also
developed to work with frequency domain data,
\cite{McKelveyAkcayLjung:96}. Whichever way the model was obtained, an
interesting the question is if it can be improved in some way by further
polishing using observed data.

In this contribution we investigate how further iterations on the $B$, $C$
and $D$ matrices can possibly improve the model quality and also
whether it is worthwhile to reestimate the $A$ matrix.

We argue that such further estimation work is well motivated. The model
properties are improved most of the time (but not always), and in some cases,
the improvements are significant.
\section{The State-Space Model}
A linear state-space model in output error form (no noise model) is
given by
\begin{subequations}
\label{eq:1}
\begin{align}
  x(t+1)&=Ax(t)+Bu(t)\\
y(t)&=Cx(t)+Du(t)+e(t)
\end{align}
\end{subequations}
By assuming Gaussian noise distribuion for $e$, the maximum likelihood estimate
(MLE) of the matrices is given by
\begin{align*}
  \min_{A,B,C,D}&\sum_{t=1}^N \|y(t)-C\hat x(t)-Du(t)\|^2\\
&\hat x(t+1)=A\hat x(t) +Bu(t)
\end{align*}
which can be rewritten as
\begin{subequations}
\label{eq:ml}
\begin{align}
  \label{eq:3}
 \min_{A,B,C,D} &V(y,u,A,B,C,D)\\ 
\label{eq:3b}
V(y,u,A,B,C,D)& =\sum_{t=1}^N\|y(t) -\hat y(t|A,B,C,D)\|^2\\ \label{eq:3c}
\hat y(t|A,B,C,D)&=Du(t)+C\sum_ {k=1}^t A^{t-k}Bu(k)
\end{align}
\end{subequations}
We note that for fixed $A,C$ the prediction $\hat y$ is linear in $B,
D$ and for fixed $A,B$ the prediction is linear in $C,D$.
\begin{remark}
\label{rem:th}
The matrices $A,B,C,D$ may be parameterized by some paremeter vector
$\theta$. The estimation and minimization is carried out with respect to
$\theta$. It will be assumed that any such parameterization is linear in
$\theta$.
\end{remark}

\section{Refining the Estimate}
With a given state-space model (\ref{eq:1}) and access to data $[y,u]$ from the
system (either data that was used to estimate (\ref {eq:1}) or a fresh dataset),
it is natural to ask if the maximum likelihood (ML) method in Eq. (\ref{eq:ml})
can be used to refine the estimate.

$\hat y(t|A,B,C,D)$ is linear in $B,D$ for fixed $A,C$, and therefore
reestimating $B,D$ is a simple linear regression problem. Once this is done, one
could fix $A,B$ and reestimate $C,D$ which again is a linear regression
problem. Repeating that gives some simple $B,C,D$ iterations, which is a way of
minimizing (\ref{eq:ml}a) for a fixed $A$, which is a bilinear problem. See, for
example, Eq. (10.68) in \cite{Ljung:99}. Simple experiments show that the
$B,C,D$ iterations typically lead to the same (possibly local) minimum of
Eq. (\ref{eq:ml}a) as applying the Gauss-Newton method to minimize
Eq. (\ref{eq:ml}a) for fixed $A$.

The aforementioned $B, C, D$ iterations is a \emph{block coordinate descent}
approach. It is faster than classical coordinate descent since in two steps all
coefficients in $B,C,D$ are updated and at each iteration more accurate values
are used from previous iteration: updated $C$ in $B,D$ iteration and updated $B$
in $C,D$ iteration. The same approach is called \emph{separable least squares}
for a slightly different cost function appearing in identification of
Hammerstein systems, \cite{bai:2004convergence}. The convergence of block
coordinate descent is guaranteed when the objective function has a unique
minimum in each coordinate block, see pages 261-262 in
\cite{luenberger:2008}. This condition corresponds to the least squares problem
in Eq. \ref{eq:ml} not being rank deficient during each $B,D$ and $C,D$
iteration. This is frequently satisfied in practice per our experiments.

After the refinement of $B,C,D$ is done, one can indeed fix these parameters and
reestimate $A$ using Eq. (\ref{eq:ml}a) -- or alternatively minimize the ML
criterion with respect to all matrices simultaneously.

\section{SISO, MISO, SIMO, and MIMO}

A typical implementation of subspace methods extract an estimate of the $A,C$
matrices first, followed by a linear regression for the $B$ and $D$ matrices
with fixed $A,C$. The result is not necessarily a (local) optimum in terms of
$A,C$ for MIMO systems per the cost function Eq. \eqref{eq:3}, and refinements
are possible. 

However, there are two special cases:

\begin{itemize}
\item SISO and MISO systems: The solution of the linear regression for $B,D$ is
  also a (local) optimum in terms of $C$. In other words, unless $A$ is refined,
  further iterations for $B$, $C$, $D$ matrices will not reduce the cost function.

\item SIMO systems: The solution of the linear regression for $B,D$ is not a
  local optimum in terms of $C$. However, solving one additional linear
  regression problem for $C, D$ matrices with fixed $A,B$ corresponds to a
  (local) optimum with respect to the $B$ matrix. After this additional linear
  regression, further iterations on $B,C,D$ matrices are not needed unless $A$
  is changed.
\end{itemize}

To clarify the statements above we have the following results:

\begin{definition}
  Two state-space realizations are \emph{input-output equivalent} if
their impulse
  responses (Markov parameters) are equal.  State-space realization
  $(A,B,C,D)$ is input-output equivalent to $(A_0,B_0,C_0,D_0)$ if
  $D_0=D$ and $CA^{i}B = C_0A_0^{i}B_0$ for all non-negative integers
  $i$.
\end{definition}
  \begin{lemma} \label{lem:1}
    Let $(A,B,C,D)$ be an order $n$ state-space realization of
    a MISO linear system.  Let $C_0$ be any row vector of the
    same size as $C$ such that $(A,C_0)$ is an observable pair. Then
    there exists matrices $B_0$ and $D_0$ such that the realization
    $(A,B_0,C_0,D_0)$ is input-output equivalent with $(A,B,C,D)$.
  \end{lemma}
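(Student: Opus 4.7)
The plan is to construct an invertible $n\times n$ matrix $T$ that commutes with $A$ and satisfies $CT = C_0$; given such a $T$, setting $B_0 := T^{-1}B$ and $D_0 := D$ finishes everything, because
\[
C_0 A^i B_0 = (CT)\,A^i\,(T^{-1} B) = C A^i (T T^{-1}) B = C A^i B
\]
for all $i \ge 0$ (using $T A^i = A^i T$), and $D_0 = D$ by construction. So all of the work lives in producing $T$.

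The MISO hypothesis is exactly what makes the construction go through. Because $C$ and $C_0$ are row vectors, the observability matrices $\mathcal{O}(A,C)$ and $\mathcal{O}(A,C_0)$, obtained by stacking $c,\,cA,\,\ldots,\,cA^{n-1}$ as block rows, are both $n\times n$, and they are invertible by the two observability assumptions. I would therefore define
\[
T := \mathcal{O}(A,C)^{-1}\,\mathcal{O}(A,C_0),
\]
which is automatically invertible. Reading off the first block row of the identity $\mathcal{O}(A,C)\,T = \mathcal{O}(A,C_0)$ immediately yields $CT = C_0$ for free.

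The key step, and the main obstacle, is to verify that this $T$ commutes with $A$. Here I would invoke Cayley--Hamilton: for any row vector $c$, $\mathcal{O}(A,c)\,A = F\,\mathcal{O}(A,c)$, where $F$ is the companion matrix of the characteristic polynomial of $A$. The crucial point is that $F$ depends only on $A$, not on $c$, so the same $F$ governs both $C$ and $C_0$. Substituting $\mathcal{O}(A,C_0) = \mathcal{O}(A,C)\,T$ into $\mathcal{O}(A,C_0)\,A = F\,\mathcal{O}(A,C_0)$ and cancelling the invertible factor $\mathcal{O}(A,C)$ on the left produces $TA = AT$. Once this commutation is in hand, the Markov-parameter calculation from the first paragraph closes the argument. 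The MISO assumption is used in exactly one place, namely to make $\mathcal{O}(A,C)$ square and hence invertible; for a MIMO system the observability matrix is rectangular and this simple construction breaks down, which foreshadows why the SIMO and MIMO cases treated later in the paper require a different argument.
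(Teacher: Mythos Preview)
Your argument has a gap: you invoke invertibility of $\mathcal{O}(A,C)$, i.e.\ observability of $(A,C)$, but the lemma only assumes observability of $(A,C_0)$. Nothing in the statement forces the original realization to be observable, and the paper's own proof deliberately avoids that assumption. With $(A,C)$ unobservable, $\mathcal{O}(A,C)$ is singular and your $T$ is undefined.

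The fix is small and instructive. Reverse the roles: set $T := \mathcal{O}(A,C_0)^{-1}\mathcal{O}(A,C)$, which exists because $(A,C_0)$ is observable. Your companion-matrix argument still yields $TA = AT$, since the identity $\mathcal{O}(A,c)\,A = F\,\mathcal{O}(A,c)$ holds for \emph{any} row vector $c$, observable or not, and you only need to cancel the invertible factor $\mathcal{O}(A,C_0)$. The first row of $\mathcal{O}(A,C_0)\,T=\mathcal{O}(A,C)$ gives $C_0T = C$. Now put $B_0 := TB$ (no inverse of $T$ needed) and conclude $C_0 A^i B_0 = C_0 T A^i B = C A^i B$.

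This repaired version is equivalent to the paper's proof, though the paper reaches $T$ by a different route: it writes $T = \sum_{i=0}^{n-1} t_i A^i$ from the outset, so commutation with $A$ is free, and then solves the linear system $\sum_i t_i\, C_0 A^i = C$ for the scalars $t_i$, which is possible exactly because the rows $C_0 A^i$ are independent. Your observability-matrix construction is a pleasant alternative derivation; by Theorem~\ref{thm:commute} every matrix commuting with $A$ is a polynomial in $A$, so the two constructions in fact produce the same $T$.
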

\begin{proof}
 Trivially we have the choice $D_0=D$ and we will show that there exists a
  $B_0$ such that the rest of the impulse response matrices coincide. 
     Pick any matrix $P$ such that $CP=C_0$.
     Consider the construction $T=\sum_{i=0}^{n-1}  t_i A^{i}$ where
     the scalars $t_i$ are selected such that 
     \begin{equation}
 C = C_0T = \sum_{i=0}^{n-1} t_i C_0 A^{i}. 
\label{eq:16}
\end{equation}
Since the pair $(A,C_0)$ is observable the corresponding observability matrix
has full rank which imply that the row vectors $C_0A^{i}$ to the right in
\eqref{eq:16} are all linearly independent. Hence a solution exists. Note that
by construction $TA=AT$.  Finally we have the identity
 \begin{equation}
    \label{eq:17}
    CA^{i}B = CPTA^{i}B = C_0A^{i} TB
 \end{equation}
which directly shows that $B_0 = TB$.
\end{proof}

\begin{theorem} \label{thm:AC}
  Consider the case when the $A$ matrix is fixed and we define
  $(A,B_*,C_*,D_*)$ to be a state-space realization which
  minimizes~\eqref{eq:3b}, w.r.t. $A,B,C$ with $A$ fixed. Let $C_0$ be any row vector
  such that $(A,C_0)$ is observable. Define $B_0$ and $D_0$ to be the
  minimizers to ~\eqref{eq:3b} w.r.t.\ $B$ and $D$, when $A$ and $C_0$
  are fixed.  Then  $ V(y,u,A,B_*,C_*,D_*) = V(y,u,A,B_0,C_0,D_0)$.
\end{theorem}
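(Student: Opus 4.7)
The plan is to chain together two simple inequalities, using Lemma~\ref{lem:1} to move between observable output matrices without changing the input-output behavior, and hence without changing the cost $V$.

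First, I would invoke Lemma~\ref{lem:1} applied to the realization $(A,B_*,C_*,D_*)$ with the prescribed $C_0$ (which by hypothesis makes $(A,C_0)$ observable). This produces matrices $B'$ and $D'=D_*$ such that $(A,B',C_0,D_*)$ is input-output equivalent to $(A,B_*,C_*,D_*)$. Since the prediction-error cost $V$ in \eqref{eq:3b} depends on the state-space parameters only through the simulated output, which in turn is determined by the Markov parameters $D$ and $\{CA^{i}B\}_{i\ge 0}$, input-output equivalent realizations yield the same value of $V$. Hence
\[
V(y,u,A,B_*,C_*,D_*) = V(y,u,A,B',C_0,D_*).
\]

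Next I would use the two optimality statements in opposite directions. On one hand, $(B_0,D_0)$ is by definition a minimizer of $V$ with respect to $(B,D)$ when $A$ and $C_0$ are held fixed, so $V(y,u,A,B_0,C_0,D_0) \le V(y,u,A,B',C_0,D_*)$. On the other hand, $(B_*,C_*,D_*)$ is a minimizer over all $(B,C,D)$ for the fixed $A$, so in particular $V(y,u,A,B_*,C_*,D_*) \le V(y,u,A,B_0,C_0,D_0)$. Combining these two inequalities with the equality from the previous paragraph forces all three quantities to coincide, which is the claim.

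The main obstacle is really hidden in the first step: one must be sure that Lemma~\ref{lem:1} is applicable regardless of whether the optimal $(A,C_*)$ is observable. Fortunately the lemma imposes observability only on the \emph{target} pair $(A,C_0)$, not on the source pair $(A,C)$, so the argument goes through unconditionally. A secondary point worth noting explicitly is that the cost really is a function of the Markov parameters alone; if \eqref{eq:3b} is of prediction-error type with zero initial state (or with initial state treated as a free parameter that can also be transformed by the similarity-like map $T$ from the proof of Lemma~\ref{lem:1}), this is immediate, and no additional work is needed.
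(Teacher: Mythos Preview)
Your proof is correct and follows essentially the same route as the paper: invoke Lemma~\ref{lem:1} to obtain an input-output equivalent realization $(A,B',C_0,D')$, use that $V$ depends only on the input-output map to get $V(A,B_*,C_*,D_*)=V(A,B',C_0,D')$, and then compare the two optimizations. The only cosmetic difference is that the paper argues directly that $B',D'$ must already be minimizers over $(B,D)$ (since the $(B,D)$-search is a restriction of the $(B,C,D)$-search), whereas you phrase the same observation as a pair of inequalities and squeeze; your additional remarks on why Lemma~\ref{lem:1} applies without observability of $(A,C_*)$ are a useful clarification.
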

\begin{proof}
 By Lemma~\ref{lem:1} we can conclude
  that for the realization $(A,B_*,C_*,D_*)$ there
  exists an input-output equivalent realization $(A,B',C_0,D')$
  which imply \\
  $ V(y,u,A,B_*,C_*,D_*) = V(y,u,A,B',C_0,D') $ since the loss function only depend
  on the input output properties and not on the specific
  realization. Since the minimization of $V$ w.r.t. $B$ and $D$  
  only has a subset of the free parameters compared to the
  minimization w.r.t. $B,C,D$  it follows that $B'$ and $D'$
  are the minimizer to $V$, i.e.\ $B_0=B'$ and $D_0= D'$.
\end{proof}

The results above can also be formulated and proved for SIMO systems. 

Consider the following result.
\begin{theorem} \label{thm:commute}
  Given a matrix $A\in\mathbb{R}^{n \times n}$  and assume there exists a vector $v$ such that $(A,v)$ is observable.
Then for a matrix $B$ the following two statements are equivalent
\begin{enumerate}[(i)]
\item $AB = BA$
\item There exists $b_i\in\mathbb{R}$, $i=0,\ldots,n-1$ such that  $B=\sum_{i=0}^{n-1}b_iA^{i}$
\end{enumerate}
\end{theorem}

\begin{proof}
If (ii) is true then (i) follows immediately. 
Assume (i) is true. 
Since $(A,v)$ is observable, the set  $ \{A^i v\}_{i=0}^{n-1}$ forms a
basis for $\mathbb{R}^{n}$. Then it follows that  
$Bv=\sum_{i=0}^{n-1} b_i A^i v$ for some scalars $b_i$.  Since $B$ commutes
 with $A$ by assumption it also commutes with all powers of $A$. Hence for all
 integers $r\geq 0$
 \begin{equation}
   \label{eq:18}
    B (A^r v) = A^r B v = A^r \sum_{i=0}^{n-1} b_i A^i v =
    \sum_{i=0}^{n-1} b_i A^i (A^r v)
 \end{equation} 
Since the set $\{A^rv\}_{i=0}^{n-1}$ span $\mathbb{R}^n$ we have shown that for any vector $x$ we
have $Bx = (\sum_{i=0}^{n-1} b_i A^i )x$ which imply (ii). 
\end{proof}

The result in Theorem~\ref{thm:commute} shows a direct limitation for MIMO
systems. The set of matrices that commutes with $A$ is only an $n$-dimensional
subspace of all $n\times n$ matrices. Hence we can only fix $n$ parameters in
$C_0$, e.g.\ one row as shown in the proof of Lemma~\ref{lem:1}. The remaining
rows in $C_0$ must be free parameters to optimize over. For MIMO systems it is
thus beneficial to iterate between minimizing $V$ w.r.t. $C,D$ and $B,D$
respectively until convergence.


Theorem~\ref{thm:AC} and Theorem~\ref{thm:commute} solely rely on the
observability assumption on the $(A,C)$ pair. One extra simplifying assumption
can be made to illustrate the underlying structure of the least squares problem
Eq.~\eqref{eq:3}. Assume $A$ has distinct eigenvalues. Then $A$ admits the
decomposition $A=P \Lambda P^{-1}$ with a diagonal matrix $\Lambda$. Rewrite
Eq.~\eqref{eq:3c} with $B\in\mathbb{R}^{n \times n_u}$,
$C\in\mathbb{R}^{1\times n}$, $D\in\mathbb{R}^{1 \times n_u}$ exactly as:
\begin{align*}
\hat y(t|A,B,C,D)&=Du(t)+CP\sum_ {k=1}^t \Lambda^{t-k}P^{-1}Bu(k)
\end{align*}
Let $\bar{C}=CP\in\mathbb{C}^{1\times n}$,
$\bar{B}=P^{-1}B\in\mathbb{C}^{n\times n_u}$ and
$L_{t-k}=[\lambda^{t-k}_1 \lambda^{t-k}_2 \dots
\lambda^{t-k}_{n}]\in\mathbb{C}^{1\times n}$ where $\lambda_i$ for
$i=\{1,\dots,n\}$ are the diagonals of $\Lambda$. For real-valued $A$, $B$, and
$C$ matrices the columns of $\bar{C}$ and the rows of $\bar{B}$ come in
conjugate-pairs. Since $\Lambda^{t-k}$ is diagonal,
$\bar{C}\Lambda^{t-k}=L_{t-k}\;diag(\bar{C})$ where $diag(\bar{C})$ is the
diagonal matrix with elements of the vector $\bar{C}$ on the
diagonals. Therefore:
\begin{align*}
\hat y(t|A,B,C,D)&=Du(t)+\sum_ {k=1}^t L_{t-k}\;diag(\bar{C}) \bar{B}u(k)
\end{align*}
This form shows that the eigenvalues of $A$, contained in $L_{t-k}$, form a set
of basis functions for the linear regression problem for $B$ (equivalently,
$\bar{B}$). Fixed $C$, and in turn $diag(\bar{C})$, is a scaling of the basis
functions. Note that $\bar{C}\in\mathbb{C}^{1 \times n}$ having a zero element
is equivalent to $(A,C)$ pair not being observable when $A$ has distinct
eigenvalues. When an element of $\bar{C}$ is zero, the corresponding basis
function is multiplied by zero and not utilized in regression.

A less formal view of Theorem~\ref{thm:AC} and its proof can also be seen from
this form. Note that the unknowns in $B,C$ can be combined into a single matrix
$X=diag(\bar{C})\bar{B}$, and the linear regression can be performed for $X, D$
for fixed A, where the rows of $X$ are in conjugate-pairs if $A, B, C$ are
real-valued. $diag(\bar{C})$ is nonsingular when all elements of $\bar{C}$ are
nonzero.  Any fixed choice of $C$ that correspond to all nonzero elements in
$\bar{C}=CV$ (i.e. $(A,C)$ pair is observable when A has distinct eigenvalues)
can be used to extract a $B$ matrix estimate from $X$, without affecting the
cost function value observed in Eq.~\eqref{eq:3}.

The remarks made through distinct eigenvalues in $A$ assumption can be relaxed
by replacing the eigenvalue decomposition $A=P \Lambda P^{-1}$ with Jordan
matrix decomposition $A=S J S^{-1}$. $J$ is block diagonal, where the blocks
$J_i\in\mathbb{C}^{n_i \times n_i}$ are elementary Jordan blocks. Then
$A^{t-k}=SJ^{t-k}S^{-1}$ and $J^{t-k}$ is again block diagonal with the blocks
$J_i^{t-k}$ as upper triangular Toeplitz matrices. This structure of the
$J_i^{t-k}$ matrix blocks can be exploited to again collect the unknowns in
$B,C$ into a single matrix. Similarly, which of the basis functions are utilized
in the linear regression can be seen through (potentially a subset of) the
elements in $\bar{C}=CS$.

The optimality results for the $B, C, D$ matrices presented in this section are
also true for frequency-domain identification when a cost function of the
following form is utilized:

\begin{equation} \label{eq:freqDomainCost}
V(G,A,B,C,D)=\sum_k\|C(z_kI-A)^{-1}B+D-G_k\|_F^2
\end{equation}

The cost function Eq.~\label{eq:freqDomainCost} is typical for many estimation
approaches including the subspace methods. Here $z_k=e^{jT_s\omega_k}$ is a
point on the unit disk and $G_k$ is a frequency response measurement
corresponding to the $k^{th}$ frequency point $\omega_k$ $rad/s$. It follows
that since also this cost function only depends on the input-output properties
of the state-space system the same argumentation as made for the time-domain
case also holds here.

\section{Discrete-Time Models with Time Domain Data}
\label{sec:td}
The following experiment was performed in \cite{matlab18a}:

\begin{enumerate}
\item Generate 200 discrete-time systems randomly with \texttt{drss}. The systems
  are of order 7 with 4 inputs and 4 outputs. The feedthrough matrix
  $D$ was set to zero.
\item Simulate each system with a Gaussian white input with unit variance
  matrix. No attempt was made to tailor the input to the system properties. The
  noise-free data are denoted by \texttt{z}. Add white Gaussian noise with unit
  covariance matrix to the output, giving the estimation data sequence
  \texttt{zn}.
\item For each dataset, estimate a subspace model of order 7 without disturbance
  model with \texttt{n4sid}, giving
  \texttt{mn=n4sid(zn,7,'disturbancemodel','none')}. Compute the error in that
  model compared to the true system by executing the commands:
  \begin{verbatim}
    en=pe(z,mn); 
    mnn=norm(en.y'*en.y/N)
  \end{verbatim}
  \vspace{-.15in}
\item Readjust the $B, C$ parameters to data:
  \begin{verbatim}
    mpBC=nn;
    mpBC.Structure.A.Free=zeros(7,7);
    mpBC=ssest(zn,mpBC);
  \end{verbatim}
\vspace{-.15in}
\item Compute the error of that model: 
  \begin{verbatim}
    en=pe(z,mpBC);
    mnp=norm(en.y'*en.y/N)
  \end{verbatim}
\vspace{-.15in}
\item Readjust the $A, B, C$ parameters of the model by further
  Maximum Likelihood iterations on \texttt{mn}: 
  \begin{verbatim}
    mp=ssest(zn,mn);
    en=pe(z,mp);
    mnp=norm(en.y'*en.y/N)
  \end{verbatim}
\vspace{-.15in}
\end{enumerate}

The results are summarized in the boxplot in Fig.~\ref{fig:box}. The
medians of the error norms are 

\texttt{mn} : 0.0370\\
\texttt{mpBC}:  0.0338\\
\texttt{mp}: 0.0267

The error for \texttt{mn} was larger than \texttt{mpBC} in 73.5 \% of
the cases. The error in \texttt{mnBC} was larger than \texttt{mp} in
80.5 \% of the cases and \texttt{mp} outperformed \texttt{mn} for 83 \%
of the data sets.

A scatter plot for the error norms for the 200 tested systems is given
in Fig.~\ref{fig:scatter}. It is in accordance with the observations
in \cite{Ljung:03ssp}.
\begin{figure}
  \centering
  \includegraphics[scale=0.4]{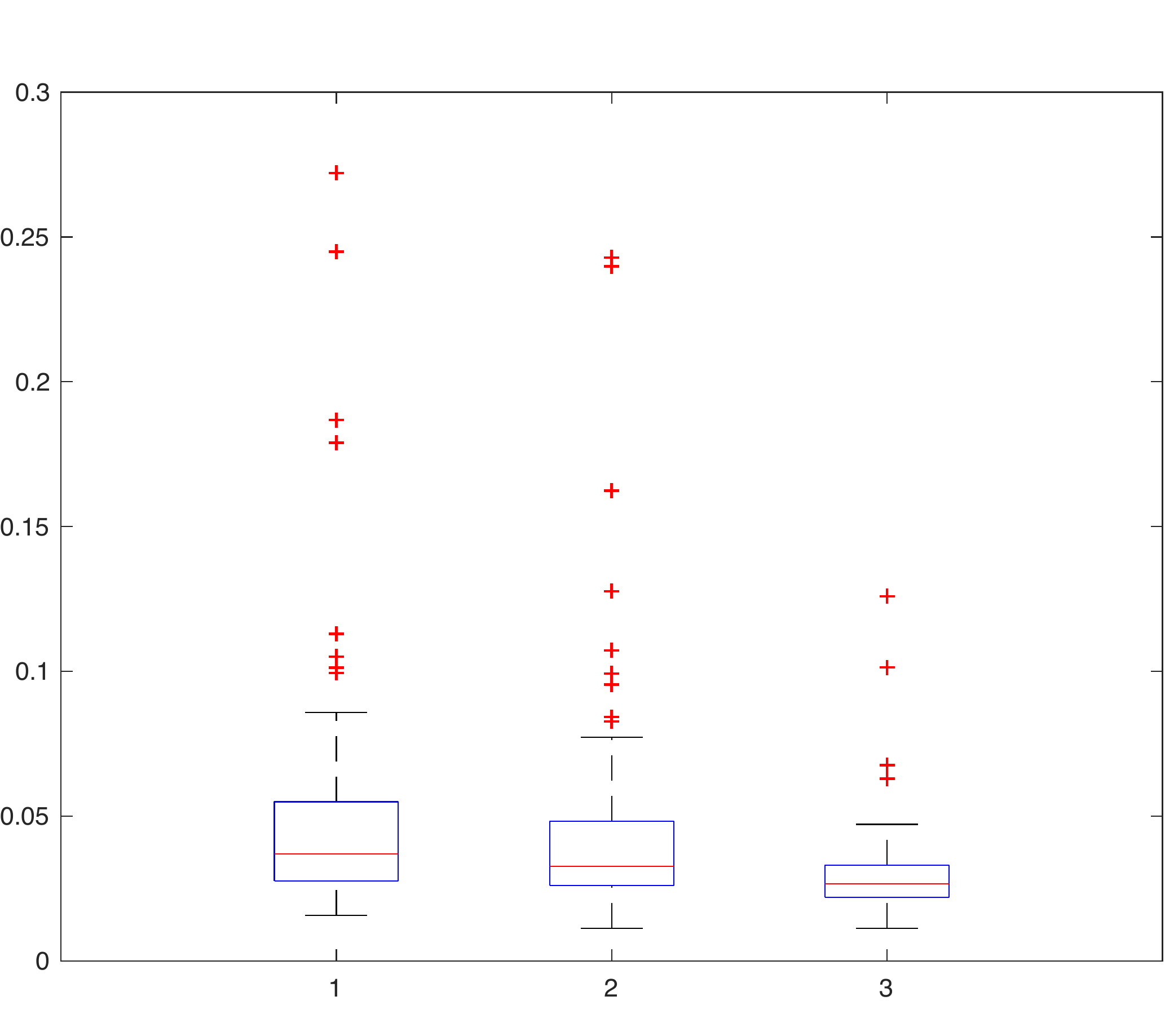}
  \caption{Boxplots for the error norms in Section \ref{sec:td}. From left to right:
    \texttt{n4sid}, \texttt{mpBC, mp}. The error norms were significantly above
    the plot limit for 9 systems for \texttt{n4sid} and \texttt{mpBC}.}
\label{fig:box}
\end{figure}
\begin{figure}
  \centering
  
\includegraphics[scale=0.4]{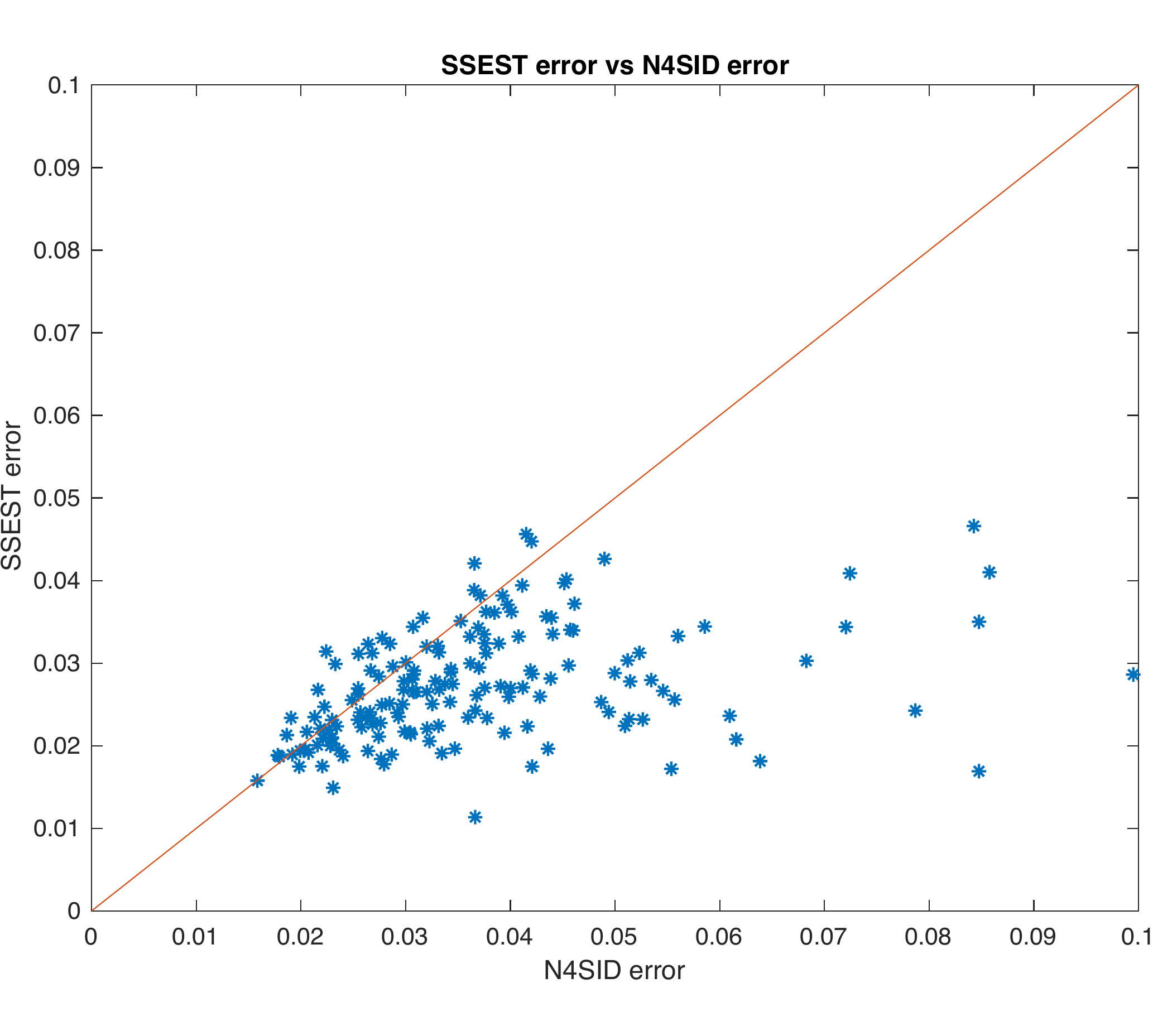}
  \caption{Plot of the error norm of the N4SID model (x-axis) vs the
    SSEST adjusted model (y-axis). 83\% of the 200 points are below
    the equal-line.}
\label{fig:scatter}
\end{figure}

\section{Continuous-Time Models with Frequency Domain Data} \label{sec:fd}
Similar experiments were carried out in the frequency domain with
continuous-time systems.
\begin{enumerate}
\item Generate 200 seventh order continuous-time systems with 4 input
  and 4 outputs using \texttt{G = rss(7,4,4)}   possibly with feedthrough terms $D$.
 Compute their frequency
  response functions \texttt{Gf = idfrd(G,FG)} at 410 linearly spaced
    frequencies \texttt{FG}. No attempts were made to adjust \texttt{FG}
    to the dynamics of \texttt{G}. Add 20 \% random, multiplicative
    noise to the response to form the frequency domain data
    \texttt{Gfn}.
\item Estimate  7th order  state-space models from the noisy frequency
  response functions, using \texttt{n4sid} and the $B, C$ adjusted
  models \texttt{mnBC} and $A, B, C$ adjusted models \texttt{mp} and
  their errors compared to the true system as in the previous section.
\end{enumerate}

The medians of the errors were found to be:

\texttt{mn}: 48.768\\
\texttt{mpBC}: 44.191\\
\texttt{mp}: 31.485

The results are summarised in the boxplots in Fig.~\ref{fig:boxFD}.

The error for \texttt{mn} was larger than \texttt{mpBC} in 83.5 \% of the
cases. The error in \texttt{mnBC} was larger than \texttt{mp} in 74.5 \% of the
cases and \texttt{mp} outperformed \texttt{mn} for 87 \% of the data sets.

\begin{figure}
  \centering
  \includegraphics[scale=0.4]{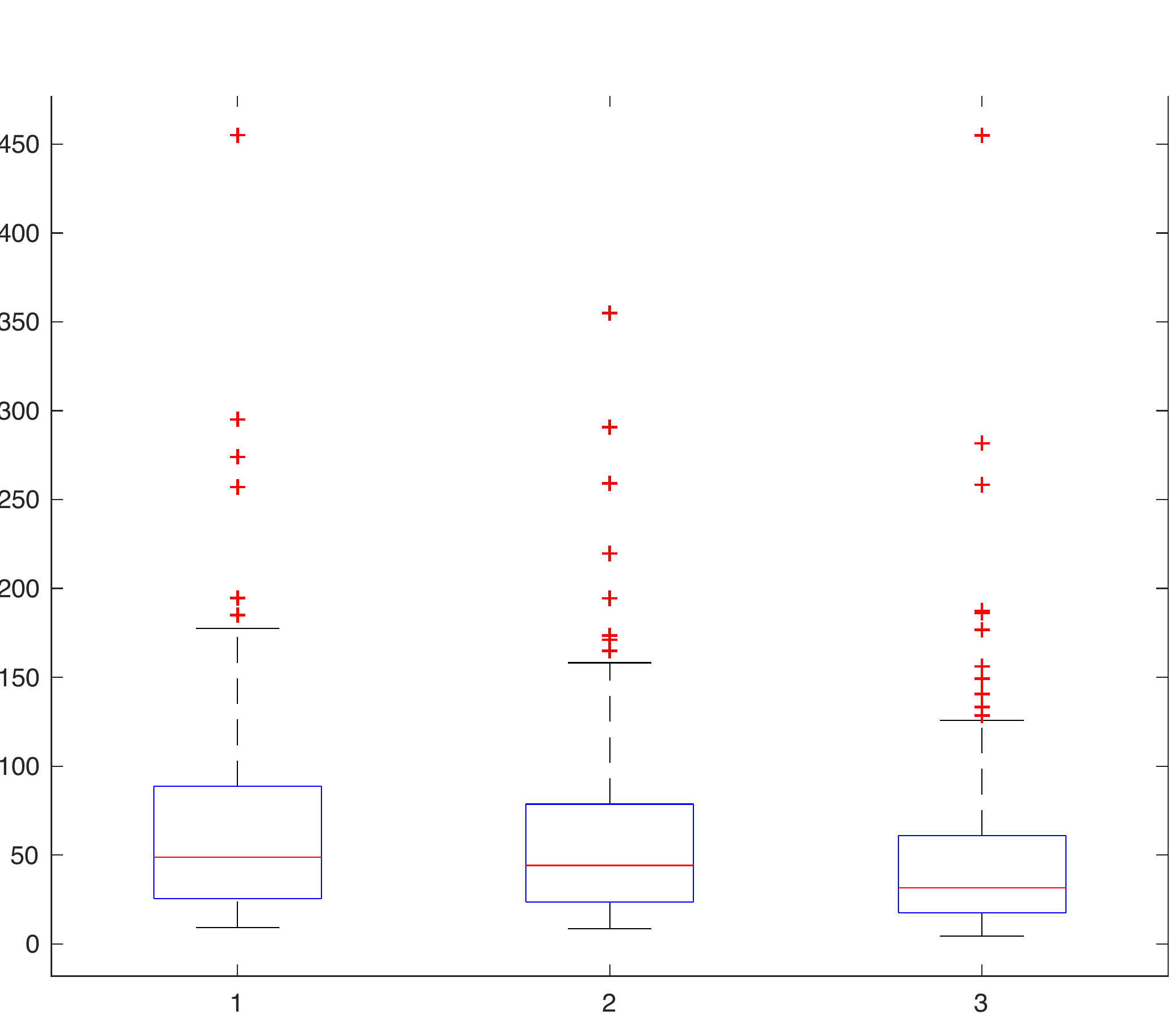}
  \caption{Boxplots fot the error norms in Section \ref{sec:fd}. From left to right:
    \texttt{n4sid}, \texttt{mpBC, mp}. }
  \label{fig:boxFD}
\end{figure}
A scatter plot for the error norms for the 200 tested systems is given in
Fig.~\ref{fig:FDscatter}.
\begin{figure}
  \centering
  
\includegraphics[scale=0.4]{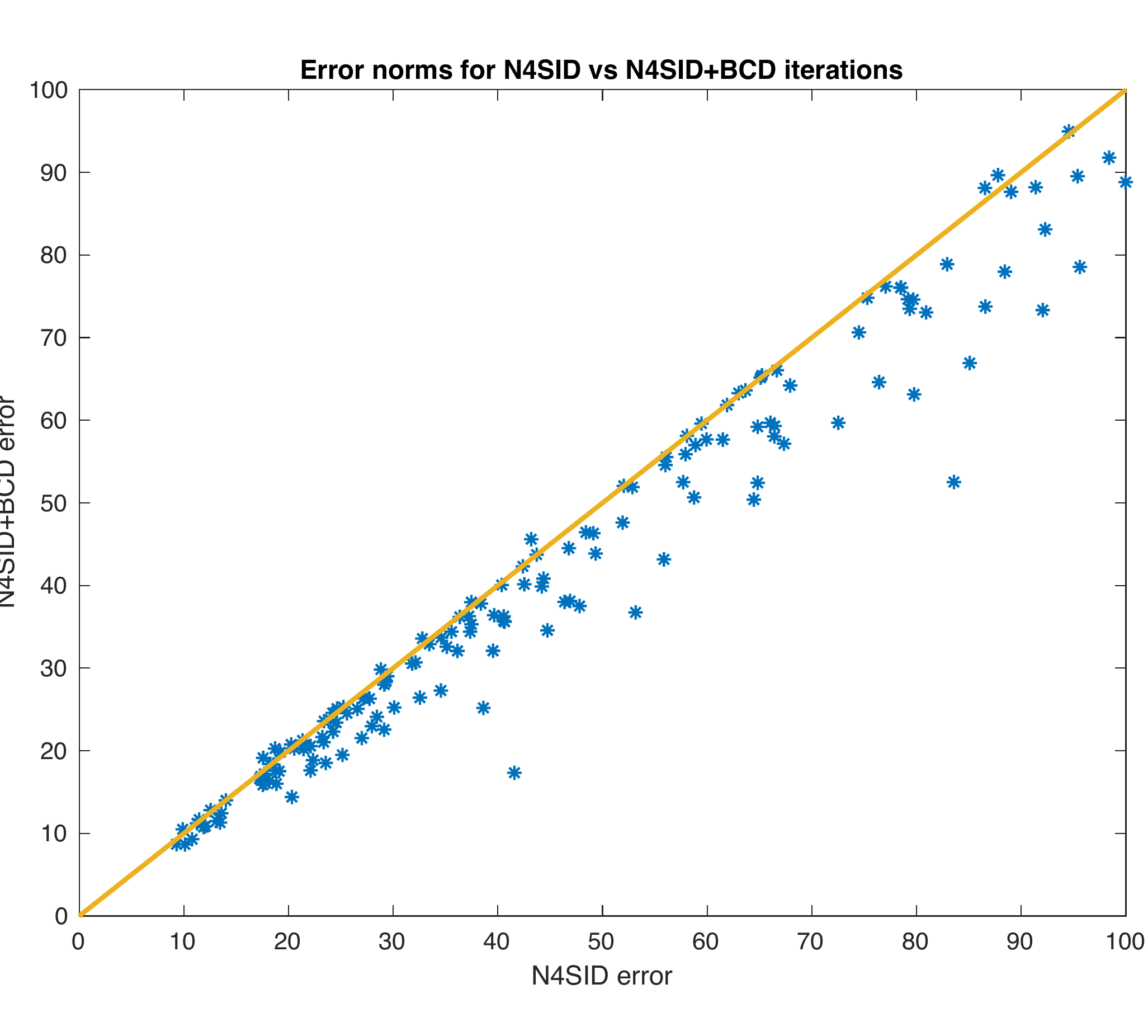}
  \caption{Plot of the error norm of the N4SID model (x-axis) vs. the
    B,C,D adjusted N4SID model (y-axis). 83.5\% of the 200 points are below
    the equal-line.}
\label{fig:FDscatter}
\end{figure}

\section{Real Data from Flexible Structures} \label{sec:xc90} Mechanical
vibration testing was performed on a Volvo XC90 (2015) rear subframe
structure. During the testing the subframe was equipped with 26 uniaxial and 10
triaxial accelerometers yielding a total of 56 measurement channels. A shaker
was used to provide excitation at two different locations. Frequency data was
obtained by employing a single input multiple output (SIMO) stepped sine testing
procedure for each shaker location. This testing directly yields measurements of
the frequency response function at the excited frequencies.  For each of the
shaker locations, a total of 2998 different frequencies were excited in the
frequency range between 60 and 500 Hz with a frequency spacing derived according
to the method described in \cite{vakilzadeh2015experiment}. For further details
of the experiment, refer to \cite{gibanica_parameter_2017}. Using the frequency
domain subspace method described in \cite{McKelveyAkcayLjung:96}, an initial
continuous-time model was derived of order 40 with 56~outputs and 2~inputs.
Subsequently the $C$ and $D$ matrices are iteratively reestimated followed by an
estimate of the $B$ and $D$ matrices. After the first $B, C, D$ iteration the
least-squares loss function is reduced to 58.5\% of the initial value. After
four additional $B, C, D$ iterations the loss function is slightly decreased to
56.9\%.  In Fig.~\ref{fig:xc90} the magnitude of the final frequency response
function corresponding to input 1 and output 7 is plotted together with the
error magnitude, that is, the magnitude of the difference between the data and
model. The error magnitude of the initial model is also plotted in the
graph. For this rather large-scale example it is clear that it is advisable to
employ at least a first $B, C, D$ iteration after the initial estimate delivered
by the subspace method.

 \begin{figure}
   \centering
  \includegraphics[width=\columnwidth]{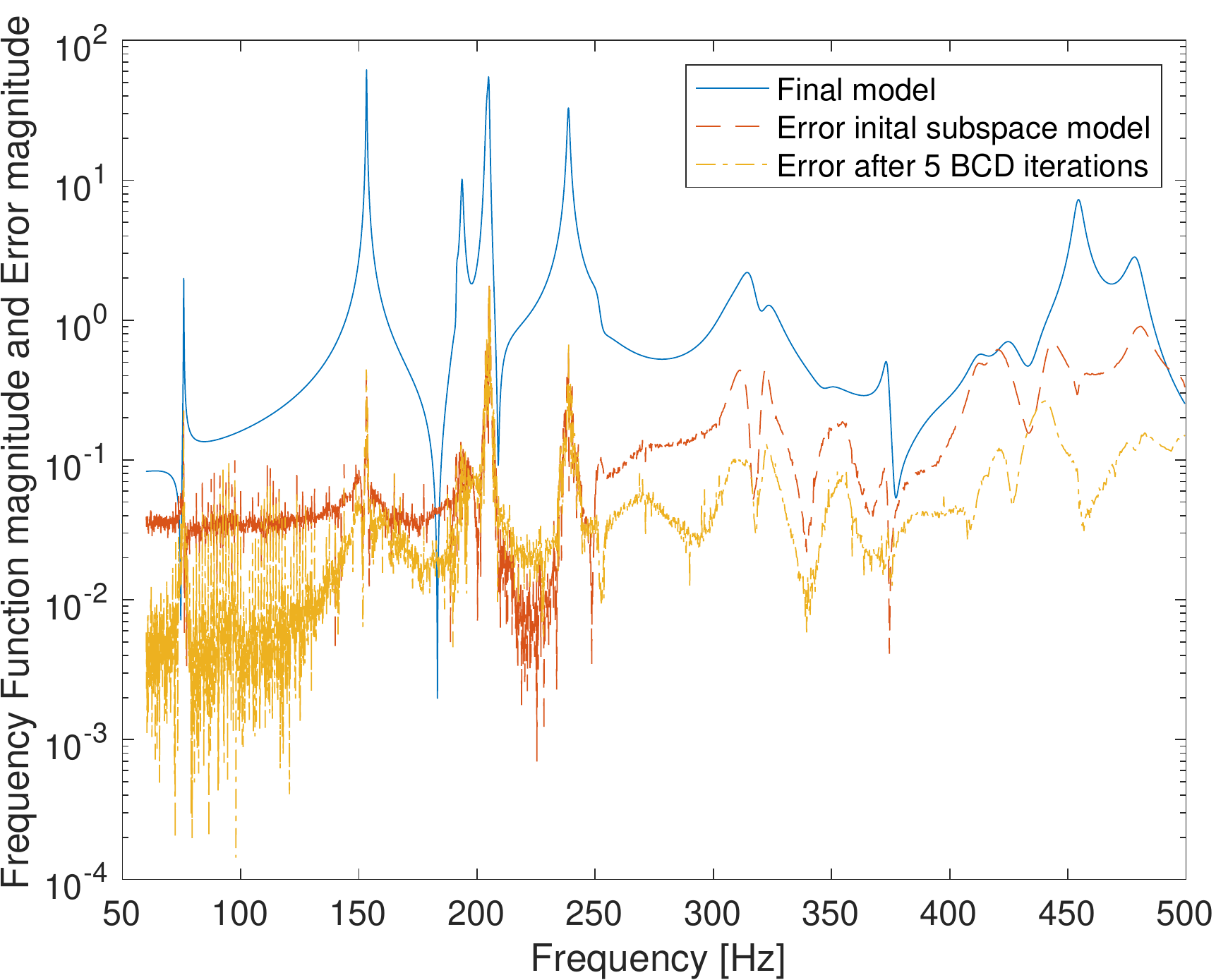}
   \caption{Magnitude of frequency response function and error
     magnitude of estimated model of order 40.}
   \label{fig:xc90}
 \end{figure}

Fig.~\ref{fig:xc90_2} compares the evolution of the cost function in
Eq.~\eqref{eq:freqDomainCost} over five steps for: (i) $B,C,D$ iterations with
fixed A, (ii) Joint $B,C,D$ minimization via Gauss-Newton method with fixed A,
(iii) Joint $A,B,C,D$ minimization via Gauss-Newton method. A subset of the
input channels, the first four of the 56 total, were used for computation
speed. A $40^{th}$ order model estimate from \texttt{n4sid},
\cite{McKelveyAkcayLjung:96}, is used as the initial model for all
approaches. The cost function value obtained in the subsequent steps for all
methods are normalized by the value attained by this initial step. For this
specific data and the initial model delivered by the subspace methods, the 
\textit{block coordinate descent} approach $B, C, D$ iterations approximately
converge in one step to a (local) minimum. This method was also the
fastest. Gauss-Netwon approach for optimizing $B, C, D$ matrices jointly
approximately converged in two steps. Optimizing over $A, B, C, D$ jointly can
potentially find a better (local) minimum compared to the previous two methods
since the $A$ matrix is also optimized, but the progress was slower. Indeed once
any optimization approach for $B, C, D$ iterations converges to a minimum, this could be followed with an
optimization over the $A$ matrix. 

\begin{figure}
  \centering
 \includegraphics[width=\columnwidth]{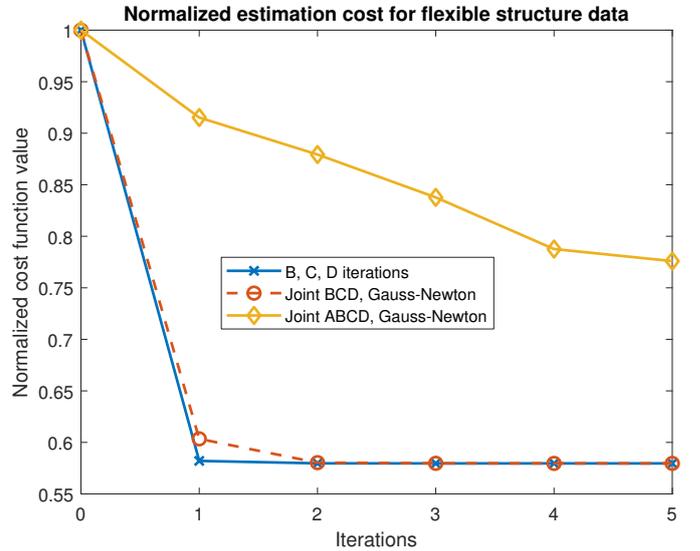}
 \caption{Evolution of the cost function values for three optimization
   approaches. The values are normalized by the value attained by the initial
   model obtained from the subspace methods.}
  \label{fig:xc90_2}
\end{figure}

\section{Conclusions}
This contribution shows that it is useful to do further iterations on the $B$,
$C$, and $D$ state-space matrix estimates obtained by the subspace
identification when the system has multiple outputs. This is tested with both
simulated and real data. It is also recommended to adjust the $A$ estimate using
maximum likelihood iterations. It leads to improvements in a majority of cases,
but not always. One reason is that subspace identification algoritms contain
several design variables (prediction horizons and prefilters) that can be
difficult to choose in an optimal way. In a sense, maximum likelihood iterations
(such as in \texttt{ssest}) perform such choices automatically.

\paragraph*{Acknowledgement} The authors would like to thank Thomas Abrahamsson
at the Department of Mechanics and Maritime Sciences, Chalmers University of
Technology for sharing the experimental data.
\bibliography{./ref2M}

\end{document}